\renewcommand{\l}{\ell}
\newcommand{\DEF}{\sl}    
\theoremstyle{plain}
\newtheorem{theorem}{Theorem}[section]
\newtheorem{lemma}[theorem]{Lemma}
\newtheorem{corollary}[theorem]{Corollary}
\newtheorem{conjecture}[theorem]{Conjecture}
\newcommand{\bep}{Balanced Minimum Evolution Problem}
\begin{document}
\title{Approximating the Balanced Minimum Evolution Problem}
\thanks{This work was supported by the ``Actions de Recherche Concert\'ees'' (ARC) fund of the ``Communaut\'e fran\c{c}aise de Belgique''. G.J. is a Postdoctoral Researcher of the ``Fonds National de la Recherche Scientifique'' (F.R.S.--FNRS)}

\author{Samuel Fiorini}
\address{\newline D\'epartement de Math\'ematique
\newline Universit\'e Libre de Bruxelles
\newline Brussels, Belgium}
\email{sfiorini@ulb.ac.be}

\author{Gwena\"el Joret}
\address{\newline D\'epartement d'Informatique
\newline Universit\'e Libre de Bruxelles
\newline Brussels, Belgium}
\email{gjoret@ulb.ac.be}

\maketitle

\begin{abstract}
We prove a strong inapproximability result for the \bep. Our proof also implies that the problem remains NP-hard even when restricted to metric instances. Furthermore, we give a MST-based $2$-approximation algorithm for the problem for such instances.
\end{abstract}

\section{Introduction}

Let $[n] := \{1,\ldots,n\}$ be a set of $n$ {\DEF species}. Let $(\delta_{ij})$ be a $n\times n$ symmetric matrix with nonnegative entries and zeroes on the diagonal, where $\delta_{ij}$ represents the {\DEF dissimilarity} between species $i$ and $j$. The {\DEF \bep} is to find a {\DEF cubic} tree $T$ (every internal vertex has degree 3) with $n$ leaves, together with a bijection between the leaves of $T$ and the $n$ species, so that
\begin{equation}
\label{eq:bep_objective}
f(T) := \sum_{i \neq j} \delta_{ij} 2^{1-d_{ij}} = \sum_{i < j} \delta_{ij} 2^{2-d_{ij}} 
\end{equation}
is minimized, where $d_{ij}$ denotes the distance between the leaves for species $i$ and $j$ in $T$. We point out that our objective function is twice the {\DEF length} of the tree $T$, which is the commonly used objective function.

This computational biology problem was introduced by Desper and Gascuel \cite{DesperGascuel02}, inspired by work of Pauplin \cite{Pauplin00}, and has been studied, e.g., in \cite{DesperGascuel04,SempleSteel04,Eickmeyer_et_al08,CuetoMatsen10,%
Catanzaro_et_al11}. Although no hardness proof for the problem has been published, 
it appears that it was known to be NP-hard since 2004 (Guillemot~\cite{Guillemot11}). 
To our knowledge, plain NP-hardness is the strongest hardness result known about the problem. In particular, the complexity of the \bep{} is still open in case the dissimilarities are restricted to be 0/1, or to satisfy the triangle inequality. Furthermore, the problem is not known to be hard to approximate.

First, in Section \ref{sec:prelim}, we start with preliminaries. Then, in Section \ref{sec:reduct}, we prove that the \bep{} does not admit any interesting approximation algorithm (unless P $=$ NP): the problem is NP-hard to approximate to within a $c^{n}$-factor for some constant $c > 1$. Finally, in Section \ref{sec:approx}, we give a simple $2$-approximation algorithm for the problem, in case the dissimilarities $\delta_{ij}$ satisfy the triangle inequality. By results of the previous section, the problem is NP-hard in this case.

\section{Preliminary Remarks and Observations}
\label{sec:prelim}

\subsection{Kraft's Inequality}

Kraft's inequality for a binary tree with $n$ leaves states that
\begin{equation}
\label{eq:Kraft}
\sum_{i\in [n]} 2^{-d_{i}} \leqslant 1,
\end{equation}
where $d_{i}$ is the distance from the root $r$ to the $i$th leaf. It is easy to prove that, if the tree is a binary cubic tree (meaning that all internal vertices have degree 3 except the root), then equality holds in \eqref{eq:Kraft}. This implies that for every feasible solution $T$ to the \bep{} and for every fixed leaf $j$, 
\begin{equation}
\label{eq:Kraft_bep}
\sum_{i \in [n] \atop i \neq j} 2^{2-d_{ij}} = 2.
\end{equation}

\subsection{The Objective as an Average Over Compatible Tours}

A result of Semple and Steel~\cite{SempleSteel04} states that $2^{2-d_{ij}}$ is the probability that leaves $i$ and $j$ are consecutive in a (undirected) tour on the leaves of $T$ chosen uniformly at random from the tours {\DEF compatible} with $T$, that is, such that the tree $T$ can be embedded in the plane so that the tour visits the leaves of $T$ in clockwise order. Thus, we have the following lemma.

\begin{lemma}
\label{lem:random_tour}
For all feasible solutions $T$ to the \bep, $f(T)$ is the expected cost of a random tour compatible with $T$.
\end{lemma}

In the light of Lemma \ref{lem:random_tour}, it should not surprise the reader that one can define the \bep{} over \emph{all} trees $T$ with $n$ leaves, by defining $f(T)$ as the expected cost of a tour picked uniformly at random from the tours compatible with $T$. Then, letting $P_{ij}=P_{ij}(T)$ denote the unique $i$--$j$ path in $T$ (with vertex set $V(P_{ij})$) and
\begin{equation}
\label{eq:pi_vector}
\pi_{ij} := 2 \prod_{u \in V(P_{ij}) \atop u \neq i, j} \frac{1}{\deg_T(u)-1},
\end{equation}
one has
$$
f(T) = \sum_{i < j} \delta_{ij} \pi_{ij}.
$$
However, it is known that the $\pi$-matrices of non-cubic trees are convex combinations of the $\pi$-matrices of cubic trees \cite{Eickmeyer_et_al08}, hence for every non-cubic tree $T$ there always exists a cubic tree $T'$ with $f(T') \leqslant f(T)$. In Section \ref{sec:approx}, we will give a polynomial time algorithm to find such a cubic tree $T'$.

\subsection{The ``All 1'' Case}

Every solution is optimal in that case:

\begin{lemma}
\label{lem:all_1}
Suppose $\delta_{ij}=1$ for every $i,j$ with $i\neq j$. Then, for all feasible solutions $T$, 
$$
f(T) = n.
$$
In particular, the optimum of the {\bep} is $n$.
\end{lemma}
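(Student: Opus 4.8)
The plan is to deduce this directly from the equality case of Kraft's inequality recorded in \eqref{eq:Kraft_bep}. Since every $\delta_{ij}$ equals $1$, the objective collapses to $f(T) = \sum_{i<j} 2^{2-d_{ij}}$, a pure sum over unordered leaf pairs of the quantities $2^{2-d_{ij}}$, with no dependence on the dissimilarity matrix at all. So the whole content of the lemma is that this combinatorial sum is the same for every cubic tree on $n$ leaves.

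First I would pass from unordered to ordered pairs, writing $2f(T) = \sum_{j\in[n]} \sum_{i\neq j} 2^{2-d_{ij}}$ and using the symmetry $d_{ij}=d_{ji}$. For each fixed leaf $j$, the inner sum is exactly $2$ by \eqref{eq:Kraft_bep}. Summing over the $n$ choices of $j$ then gives $2f(T)=2n$, hence $f(T)=n$. As this value is independent of $T$, every feasible solution is optimal and the optimum of the \bep{} equals $n$.

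Alternatively, and perhaps more transparently, one can invoke Lemma~\ref{lem:random_tour}: $f(T)$ is the expected cost of a tour chosen at random among the tours compatible with $T$. A tour on the $n$ leaves is a cyclic sequence in which exactly $n$ unordered pairs are consecutive, so if every dissimilarity is $1$ then every compatible tour has cost exactly $n$, and the expectation is trivially $n$ regardless of the distribution or of $T$.

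There is no genuine obstacle here: the statement is an immediate corollary of the equality case of Kraft's inequality (equivalently, of Lemma~\ref{lem:random_tour}). The only point requiring a little care is the bookkeeping factor $2$ incurred when moving between ordered and unordered pairs in the first argument; the second argument avoids even that.
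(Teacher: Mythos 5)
Your proposal is correct and matches the paper exactly: your first argument (doubling to ordered pairs and applying the Kraft equality \eqref{eq:Kraft_bep} leaf by leaf) is the paper's Proof~1, and your alternative via Lemma~\ref{lem:random_tour} (every compatible tour has $n$ unit-cost edges) is the paper's Proof~2. Nothing is missing.
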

\begin{proof}[Proof 1]
By \eqref{eq:Kraft_bep},
$$
f(T) = \sum_{i < j} 2^{2-d_{ij}} = \frac{1}{2} \sum_{j \in [n]} 
\sum_{i \in [n] \atop i \neq j} 2^{2-d_{ij}} = \frac{1}{2} 2n = n
$$
\end{proof}
\begin{proof}[Proof 2]
Every tour on the leaves of $T$ has $n$ edges, of cost $1$ each. By Lemma \ref{lem:random_tour}, it follows that $f(T) = n$ for all feasible solutions $T$.
\end{proof}

\section{NP-Hardness and Inapproximability}
\label{sec:reduct}

\begin{theorem}
\label{th:inapprox}
There exists a constant $c > 1$ such that the {\bep} has no
$c^{n}$-approximation algorithm unless P $=$ NP, where
$n$ denotes the number of species. This remains true even when all entries of the dissimilarity matrix are in $\{0,1\}$.
\end{theorem}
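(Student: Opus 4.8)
The plan is to give a polynomial reduction from an NP-hard problem to the \bep{} with $\{0,1\}$ dissimilarities, engineered so that the optimum value jumps by a $c^{n}$ factor between ``yes'' and ``no'' instances. It helps to first record what kind of instance can possibly show such a gap. Writing $\delta_{ij}=1$ exactly for the pairs $ij$ lying in some graph $H$ on $[n]$, every feasible tree satisfies
$$
f(T)=\sum_{ij\in E(H)} 2^{2-d_{ij}},
$$
and by Lemma~\ref{lem:all_1} we always have $f(T)\le\sum_{i<j}2^{2-d_{ij}}=n$. Hence a $c^{n}$ gap forces the ``yes'' optimum to be $2^{-\Omega(n)}$: there must exist a cubic tree in which \emph{every} edge of $H$ has its two endpoints at distance $\Omega(n)$. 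Conversely, ``no'' instances must make this impossible, so that in every cubic tree some edge of $H$ stays at distance at most a $(1-\varepsilon)$ fraction of that bound, and then $f(T)\ge 2^{-(1-\varepsilon)\Omega(n)}$, which is larger by a $c^{n}$ factor. So the reduction must make the following problem NP-hard: given a suitable graph $H$, is there a cubic tree that simultaneously stretches all edges of $H$ to near-linear leaf-distance? I would obtain this by encoding a Hamiltonicity- or linear-layout-type problem into $H$, arranging that a positive certificate for the source instance yields an explicit caterpillar (path-like cubic tree) whose spine order separates the endpoints of every edge of $H$ by $\Omega(n)$ positions, and that no such arrangement exists for negative instances.

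Completeness is then the routine direction: from a positive certificate one writes down the caterpillar above and bounds $f(T)=\sum_{ij\in E(H)}2^{2-d_{ij}}$ by a geometric-type sum, the dominant contribution being $|E(H)|\cdot 2^{2-\Omega(n)}=2^{-\Omega(n)}$. The crux is soundness: for a negative instance one must lower-bound $f(T)$ over \emph{all} cubic trees. I would argue by a dichotomy on the ``stretch profile'' of $T$. If $T$ has some edge of $H$ at distance $O(\log n)$ --- which happens in particular whenever $T$ has small diameter, e.g.\ a balanced cubic tree --- then that single pair already contributes $2^{2-O(\log n)}=n^{-O(1)}$ to $f(T)$, which dwarfs the ``yes'' optimum $2^{-\Omega(n)}$, so such trees are useless to a ``no'' adversary. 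Otherwise every edge of $H$ is at distance $\Omega(n)$; then $T$ has large diameter, and one must distil from $T$ a linear layout of $H$ of large antibandwidth (large minimum edge-stretch), contradicting the negative promise. This last extraction step --- turning an arbitrary high-stretch cubic tree into a good linear arrangement of $H$, showing in effect that the long internal path of $T$ can ``carry'' all of $H$ --- is where I expect the real difficulty to lie, and it is precisely here that the combinatorics of the source problem and the design of $H$ have to be tailored to each other.

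Finally, one checks that the reduction runs in polynomial time and that the number $n$ of species produced is polynomially bounded in the source instance size, so that a $c^{n}$-approximation algorithm would decide the source problem, establishing the theorem. Separately, I would note that a variant of the construction keeping the ``$\delta=0$'' graph $\overline{H}$ a disjoint union of cliques (so that $(\delta_{ij})$ satisfies the triangle inequality) still encodes an NP-hard decision, which yields the promised NP-hardness of the \bep{} on metric instances.
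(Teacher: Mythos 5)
Your high-level framing is sound: with $0/1$ dissimilarities given by a graph $H$, Lemma~\ref{lem:all_1} forces $f(T)\leqslant n$, so an exponential gap requires ``yes'' instances to admit a cubic tree stretching every edge of $H$ to leaf-distance $\Omega(n)$, and ``no'' instances to forbid this. The paper's proof fits exactly this template. But your proposal stops at the point where the actual work begins, and you say so yourself: you never name the source problem, never construct $H$, and the soundness direction --- showing that for a ``no'' instance \emph{every} cubic tree leaves some edge of $H$ short --- is left as ``where I expect the real difficulty to lie.'' As written this is a proof strategy, not a proof; the step you defer is precisely the theorem's content. Moreover, the route you sketch (extracting a high-antibandwidth \emph{linear} layout of $H$ from an arbitrary high-stretch cubic tree) is harder than necessary, since a cubic tree with large diameter need not be anywhere close to a caterpillar, and collapsing it to a linear order can destroy the distance guarantees you are trying to preserve.

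The paper avoids this entirely by reducing from $3$-Colorability and exploiting a \emph{planted triangle} rather than a linear layout. One pads the $p$ graph vertices with $k=\Theta(p)$ dummy species of dissimilarity $0$ (so there is room to stretch), and the key observation \eqref{eq:dist} is purely local: a triangle $v_1v_2v_3$ of $G$ whose three pairwise leaf-distances all exceed $\lambda k$ forces a tripod in $T$ with long legs meeting at a vertex $w$, so each of the three components of $T-w$ contains fewer than $\lambda k-1$ internal vertices; hence any two $G$-vertices landing in the same component are at distance less than $\lambda k$ and must be non-adjacent, yielding a $3$-coloring. No global rearrangement of $T$ is needed. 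Completeness likewise uses a three-legged spider (one leg per color class, each leg prefixed by $(k-1)/3$ dummy leaves) rather than a caterpillar, which is what makes every $G$-edge reach distance $(2k+4)/3>\lambda k$. Your closing remark on the metric case is also off-track: restricting $\overline{H}$ to a union of cliques forces $H$ to be complete multipartite, for which hardness does not follow from your construction; the paper instead simply adds $1$ to every off-diagonal entry, which is automatically a semimetric and, by Lemma~\ref{lem:all_1}, shifts every objective value by exactly $n$, preserving optimality.
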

\begin{proof}
The reduction is from the $3$-Colorability Problem: We are given a (simple, undirected)  
graph $G$ on $p$ vertices, and have to decide whether $V(G)$ can be partitioned into three stable sets
(recall that a {\em stable set} is a set of mutually non-adjacent vertices).

We may assume without loss of generality that $G$ contains two vertex-disjoint triangles. Indeed, if not
it suffices to add twice three new vertices to $G$ that form a triangle; this has clearly no influence on whether $G$ is $3$-colorable or not. 

Let $\lambda$ be an arbitrary constant with $1/2 < \lambda < 2/3$. 
We will prove the claim with 
$$
c:= 2^{(2/3 - \lambda)(3 - 4\lambda)} > 1.
$$
(By taking $\lambda$ sufficiently close to $1/2$, one has $c \geqslant 1.12$.)

Let $m$ be the number of edges in $G$.
We may assume 
\begin{equation}
\label{eq:m}
m \leqslant 2^{(2/3 - \lambda) p} = 2^{(2/3 - \lambda) |V(G)|}
\end{equation}
because otherwise $G$ has bounded size and we can check whether $G$ is $3$-colorable using brute force.

Define $k$ as the smallest integer satisfying 
$k \geqslant p/(2\lambda - 1)$ and $k \equiv 1$~(mod 3).
Consider an arbitrary ordering $v_{1}, v_{2}, \dots, v_{p}$ of the vertices of $G$.
We define an instance of the {\bep} with $n:= p + k$ species as follows.
The first $p$ species are associated with the
vertices of $G$: species $i$ (for $i\in [p]$) corresponds to vertex $v_{i}$. 
The matrix $(\delta_{ij})$ is defined by setting, for $i \neq j$, 
$$
\delta_{ij} := \left \{
\begin{array}{ll}
1 & \quad \text{if $i, j\in [p]$ and $v_{i}v_{j} \in E(G)$}, \\
0 & \quad \text{otherwise}. \\
\end{array}
\right.
$$

Consider an optimal solution for the instance of the {\bep} described above.
This solution is a cubic tree $T$ with $n$ leaves together with
a bijection from the set of species to the set of leaves of $T$.
For simplicity, we denote by $v_{i}$ ($i \in  [n]$) the leaf of $T$
associated to species $i$. (Thus, when $i \leqslant p$, 
$v_{i}$ denotes both the $i$th vertex of $G$ and the corresponding leaf of $T$; 
which one is meant will be clear from the context.)
The cost of this optimal solution is denoted $OPT$. Thus, we have
\begin{equation}
\label{eq:OPT}
OPT = \sum_{v_{i}v_{j} \in E(G)}  2^{2 - d_{ij}}
\end{equation}
where $d_{ij}$ is the distance between species $i$ and $j$ in $T$.

First we show:
\begin{equation}
\label{eq:dist}
\textrm{If $d_{ij} > \lambda k$ for all $v_{i}v_{j} \in E(G)$
then $G$ is $3$-colorable}.
\end{equation}
Consider an arbitrary triangle in $G$; without loss of generality we may assume
that the vertices of this triangle are $v_{1}, v_{2}, v_{3}$. 
Let $C$ be the union of the $v_{1}$--$v_{2}$ path, the $v_{1}$--$v_{3}$ path, and
the $v_{2}$--$v_{3}$ path in $T$. 
(Recall that there is unique path between two given vertices in a tree, thus $C$ is well defined.)
Then $C$ is isomorphic to a subdivision of the claw $K_{1, 3}$ and its three leaves are 
$v_{1}, v_{2}$, and $v_{3}$. Let $w$ be the unique vertex in $C$ with degree $3$.
Let $P_{\l}$ ($\l\in \{1,2,3\}$) denote the path obtained from 
the $v_{\l}$--$w$ path in $T$ by removing
$w$. 
Since $v_{1}, v_{2}, v_{3}$ are pairwise adjacent in $G$, we have
\begin{equation}
\label{eq:P}
|P_{\l}| + |P_{\l'}| = d_{\l\l'} > \lambda k
\end{equation}
for all $\l, \l' \in \{1,2,3\}$ with $\l \neq \l'$. 
($|P_{\l}|$ stands for the number of vertices in $P_{\l}$.)

Let $T_{\l}$ ($\l\in \{1,2,3\}$) be the component of $T - w$ containing $v_{\l}$, and let
$X_{\l}$ be the set of internal vertices in $T$ that are included in $T_{\l}$.
Observe that $T$ has $n-2$ internal vertices and that
all vertices of $P_{\l}$ ($\l\in \{1,2,3\}$) are internal vertices of $T$, except for $v_{\l}$.
Using~\eqref{eq:P} and $k \geqslant p/(2\lambda - 1)$ we obtain
\begin{equation}
\label{eq:X}
|X_{\l}| \leqslant (n - 2) + 1 - \sum_{\l' \in \{1,2,3\}, \l' \neq \l} |P_{\l'}| 
<  n - \lambda k - 1
= (1 - \lambda)k + p - 1
\leqslant \lambda k  - 1
\end{equation}
for all $\l \in \{1,2,3\}$.
 
Let $S_{\l}$ ($\l\in \{1,2,3\}$) be the set of vertices $v_{i}$ of $G$ such that
$v_{i}$ is a leaf of $T$ that is included in $T_{\l}$. 
(Thus $S_{1} \cup S_{2} \cup S_{3} = V(G)$.)
Every two vertices in $S_{\l}$ are at distance at most 
$|X_{\l}| + 1 < \lambda k$ in $T$ by \eqref{eq:X}.
Therefore, $S_{1}, S_{2}, S_{3}$ are stable sets of $G$, and $G$ is $3$-colorable.
This proves~\eqref{eq:dist}.

Next we prove:
\begin{enumerate}[(a)]
\item if $G$ is not $3$-colorable then $\displaystyle OPT \geqslant 2^{2 - \lambda k}$;
\item if $G$ is $3$-colorable then $\displaystyle OPT \leqslant m\cdot 2^{2 - (2k+4)/3}$.
\end{enumerate}
The first part of the above claim is a direct consequence of~\eqref{eq:dist}:
If $G$ is not $3$-colorable then there is an edge $v_{i}v_{j}$ of $G$ such that
$d_{ij} \leqslant \lambda k$, and hence
$OPT \geqslant 2^{2 - d_{ij}} \geqslant 2^{2 - \lambda k}$.

For the second part, let $S_{1}$, $S_{2}$, $S_{3}$ denote the three color classes of a 
$3$-coloring of $G$. Recall that $G$ has two vertex-disjoint triangles, which implies
$|S_{\l}| \geqslant 2$ for every $\l \in \{1,2,3\}$. 
We build a feasible solution $T'$ from this coloring which will imply
the desired upper bound on $OPT$.

The tree $T'$ is defined as follows. First, for each $\l \in \{1,2,3\}$, create
a path $P_{\l}$ on $(k-1)/3 + |S_{\l}| - 1$ vertices (here we use that $k \equiv 1$~(mod 3)).
Let $a_{\l}$ and $b_{\l}$ be the two endpoints of $P_{\l}$.
Create a new vertex $w$ and make it adjacent to $b_{1}, b_{2}$, and $b_{3}$.
Next, attach a leaf to each vertex of degree $2$ in the resulting tree, and attach 
two new leaves to each of $a_{1}$, $a_{2}$, and $a_{3}$. This defines the tree $T'$.
The species are placed in the following way on the leaves of $T'$:
for each $\l \in \{1,2,3\}$, put the species corresponding to vertices in $S_{\l}$ on
the $|S_{\l}|$ leaves that are closest to $a_{\l}$ in $T'$, in an arbitrary way.
The remaining $k$ species are placed arbitrarily on the $k$ leaves of $T'$ that remain free.

Since $|S_{\l}| \geqslant 2$ for every $\l \in \{1,2,3\}$,
the two leaves adjacent to $a_{\l}$ are associated with species in $[p]$, and thus
the first $(k-1)/3$ vertices of the  path from $b_{\l}$ to $a_{\l}$ in $T'$ 
are adjacent to leaves associated with species not in $[p]$.
Hence, if $i,j \in [p]$ are species such that $v_{i}v_{j}\in E(G)$, then
they are at distance at least $(k-1)/3 + 1 + (k-1)/3 + 1 = (2k+4)/3$ in $T'$.
Therefore, the cost of this feasible solution is at most $m\cdot 2^{2 - (2k+4)/3}$, implying
$OPT \leqslant m\cdot 2^{2 - (2k+4)/3}$ as claimed.

Now, since
\begin{align*}
\frac{2^{2 - \lambda k}}{m\cdot 2^{2 - (2k+4)/3}} 
&= \frac{2^{(2/3 - \lambda) k  + 4/3}}{m} & \\
&> \frac{2^{(2/3 - \lambda) k}}{m} & \\
&\geqslant 2^{(2/3 - \lambda) (k - p)} & \textrm{(by \eqref{eq:m})} \\
&= 2^{(2/3 - \lambda) (n - 2p)} & \\
&\geqslant 2^{(2/3 - \lambda)(3 - 4\lambda)n} 
& \textrm{(since $p \leqslant (2\lambda - 1) k \leqslant (2\lambda - 1) n$)} \\
&= c^n, & & 
\end{align*}
it follows that a $c^{n}$-approximation algorithm for the {\bep} could
be used to decide whether $G$ is $3$-colorable or not. This concludes the proof.
\end{proof}

An instance of the {\bep} is said to be {\DEF metric} if the dissimilarity matrix $(\delta_{ij})$ is a semimetric, that is, if the $\delta_{ij}$'s satisfy
$$
\delta_{ik} \leqslant \delta_{ij} + \delta_{jk}
$$
for all distinct species $i$, $j$, $k$. 

\begin{corollary}
The {\bep} is NP-hard on metric instances. This remains true even if the non-diagonal entries of the dissimilarity matrix are all in $\{1,2\}$.
\end{corollary}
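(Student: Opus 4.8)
The plan is to recycle, almost verbatim, the reduction from the $3$-Colorability Problem used in the proof of Theorem~\ref{th:inapprox}, after the trivial observation that any symmetric matrix with zero diagonal and off-diagonal entries in $\{1,2\}$ is automatically a semimetric: for distinct species $i,j,k$ one has $\delta_{ik}\leqslant 2\leqslant 1+1\leqslant\delta_{ij}+\delta_{jk}$. (Note that the $\{0,1\}$-instances produced in Theorem~\ref{th:inapprox} are genuinely non-metric, e.g.\ whenever $\delta_{ik}=1$ while $\delta_{ij}=\delta_{jk}=0$, so some modification really is needed.) Thus it suffices to show that the {\bep} remains NP-hard when all off-diagonal dissimilarities are in $\{1,2\}$. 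I would do this by taking the instance on $n=p+k$ species built in the proof of Theorem~\ref{th:inapprox}, with its $\{0,1\}$-valued matrix $(\delta_{ij})$, and replacing it by the $\{1,2\}$-valued matrix $(\delta'_{ij})$ defined by $\delta'_{ij}:=\delta_{ij}+1$ for $i\neq j$ (and $\delta'_{ii}:=0$).

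The key step is to relate the two objective functions. Writing $f$ and $f'$ for the objectives of the old and new instances, I would invoke \eqref{eq:Kraft_bep} (equivalently Lemma~\ref{lem:all_1}) to get, for every feasible solution $T$,
$$
f'(T)=\sum_{i<j}(\delta_{ij}+1)\,2^{2-d_{ij}} = f(T) + \sum_{i<j}2^{2-d_{ij}} = f(T) + n .
$$
Hence, denoting by $OPT'$ the optimum of the new instance, $OPT'=OPT+n$, since the additive term $n$ does not depend on $T$.

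Finally I would transfer the case distinction. The analysis proving parts (a) and (b) in Theorem~\ref{th:inapprox} gives: if $G$ is not $3$-colorable then $OPT\geqslant 2^{2-\lambda k}$, whereas if $G$ is $3$-colorable then $OPT\leqslant m\cdot 2^{2-(2k+4)/3}$; and the displayed chain of inequalities at the end of that proof shows $m\cdot 2^{2-(2k+4)/3}<2^{2-\lambda k}$. Adding $n$ throughout, the $\{1,2\}$-instance satisfies $OPT'\leqslant m\cdot 2^{2-(2k+4)/3}+n$ when $G$ is $3$-colorable and $OPT'> m\cdot 2^{2-(2k+4)/3}+n$ otherwise, so the decision version of the {\bep} with threshold $B:=m\cdot 2^{2-(2k+4)/3}+n$ separates the two cases. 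Since all distances $d_{ij}$ lie in $\{1,\dots,n-2\}$ and $(2k+4)/3$ is an integer (because $k\equiv 1\pmod 3$), this $B$ is a dyadic rational of bit-length polynomial in $|V(G)|$ and is computable in polynomial time, so a polynomial-time algorithm for the {\bep} on metric $\{1,2\}$-instances would decide $3$-colorability. The one point I would be careful about — and the only real subtlety — is that the additive shift by $n$ collapses the multiplicative $c^n$ gap of Theorem~\ref{th:inapprox} down to a merely additive gap between the YES and NO values of $OPT'$; this is why the construction yields NP-hardness but not inapproximability for metric instances, which is exactly what the corollary claims. Everything else is routine.
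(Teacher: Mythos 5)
Your proposal is correct and follows essentially the same route as the paper: add $1$ to every off-diagonal entry to obtain a $\{1,2\}$-valued semimetric, then use Lemma~\ref{lem:all_1} (equivalently \eqref{eq:Kraft_bep}) to show the objective of every feasible solution shifts by exactly $n$, so optimality is preserved. Your extra discussion of the explicit decision threshold and the collapse of the multiplicative gap to an additive one is a harmless elaboration of the same argument (modulo the immaterial slip that leaf-to-leaf distances actually lie in $\{2,\dots,n-1\}$).
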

\begin{proof}
By Theorem~\ref{th:inapprox} the {\bep} is NP-hard when all dissimilarities are in $\{0,1\}$. Consider such an instance and add $1$ to every non-diagonal entry of the dissimilarity matrix $(\delta_{ij})$, giving a dissimilarity matrix $(\delta'_{ij})$. Then $(\delta'_{ij})$ is a semimetric, because
$$
\delta'_{ik} \leqslant 2 = 1 + 1 \leqslant \delta'_{ij} + \delta'_{jk}
$$
for all distinct species $i$, $j$, $k$. 

Consider a feasible solution $T$ to the instance, and let $f(T,(\delta_{ij}))$ and $f(T,(\delta'_{ij}))$ denote the cost of the solution w.r.t.\ $(\delta_{ij})$ and $(\delta'_{ij})$, respectively.
Let $(u_{ij}) := (\delta'_{ij}) - (\delta_{ij})$.
Then $f(T,(\delta'_{ij})) 
= f(T,(\delta_{ij})) + f(T,(u_{ij}))
= f(T,(\delta_{ij})) + n$ by Lemma \ref{lem:all_1}. 
It follows that a solution to the modified instance is optimal
if and only if it is optimal for the original instance. 
\end{proof}

\section{A $2$-Approximation Algorithm for Metric Instances}
\label{sec:approx}

In this section we assume that the dissimilarity matrix $(\delta_{ij})$ is a semimetric. 
We describe a MST-based $2$-approximation algorithm for this special case.

\subsection{Two Lower Bounds} 

Let $TSP$ denote the cost of an optimal tour on the $n$ species with respect to the costs $\delta_{ij}$, and let again $OPT$ denote the cost of an optimal solution to the \bep. By Lemma \ref{lem:random_tour}, because the average of a random variable is always at least the minimum value achieved by the random variable, we conclude
$$
OPT \geqslant TSP.
$$
Now let $MST$ denote the cost of a minimum spanning tree on the species w.r.t.\ the costs $\delta_{ij}$. It is known that $MST$ is a lower bound on $TSP$, thus also
\begin{equation}
\label{eq:MST_bd}
OPT \geqslant MST.
\end{equation}

\subsection{The Algorithm and its Analysis}

\begin{algorithm}[ht]
\caption{A $2$-approximation algorithm for metric instances.\label{alg:approx}}
\begin{algorithmic}[1]
\STATE Compute a minimum spanning tree $T_0$ on the $n$ species w.r.t.\ costs $\delta_{ij}$.
\STATE $T \longleftarrow T_0$
\WHILE{there is a species $i \in V(T)$ that is not a leaf}
\STATE Relabel internal vertex $i$ as $i'$.
\STATE Add new leaf to $T$ adjacent to $i'$ through a new edge of zero cost, label the leaf $i$.
\ENDWHILE 
\STATE
Find a feasible cubic tree $T'$ with $f(T') \leqslant f(T)$. 
\RETURN $T'$
\end{algorithmic}
\end{algorithm}

Consider Algorithm \ref{alg:approx} above. 

First, it is clear that the cost of $T_0$, as a solution of the minimum spanning tree problem, is $MST$. 

Second, observe that the modifications performed on $T$ in steps 3--6 induce an extended semimetric $(\hat{\delta}_{ij})$ defined over the whole vertex set of the final tree $T$. In this semimetric, for every leaf $i$ that was moved to the exterior of the tree, we have $\hat{\delta}_{ii'} = 0$. 

Third, observe that the final tree $T$ is an optimal solution of the minimum spanning tree problem with respect to the extended semimetric $(\hat{\delta}_{ij})$, of cost $MST$. Hence, every closed walk that visits each edge of $T$ twice has cost $2MST$. Since any tour on the leaves of $T$ that is compatible with $T$ can be obtained by shortcutting such a closed walk, every such tour has cost at most $2MST$, because $(\hat{\delta}_{ij})$ is a semimetric. 

Fourth, by combining Lemma \ref{lem:random_tour} and \eqref{eq:MST_bd}, we conclude that Algorithm \ref{alg:approx} returns a feasible solution $T'$ whose cost is at most $2MST$, hence at most $2OPT$. It follows from Lemma \ref{lem:round_to_cubic} below that the whole algorithm, and in particular step 7, can be implemented so that its running time is polynomial. 

\begin{lemma}
\label{lem:round_to_cubic}
Let $T$ be any tree with $n$ leaves, namely, the $n$ species. Then one can find in polynomial time a feasible cubic tree $T'$ with $f(T') \leqslant f(T)$.
\end{lemma}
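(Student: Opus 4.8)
The plan is to reduce the non-cubic tree $T$ to a cubic tree by repeatedly eliminating ``bad'' internal vertices — those of degree $\ge 4$ — and ``bad'' structure — internal vertices of degree $\le 2$ and internal edges — one at a time, each time not increasing $f$. I would work directly with the definition of $f(T)$ as $\sum_{i<j}\delta_{ij}\pi_{ij}$ via the $\pi$-vector in~\eqref{eq:pi_vector}, rather than with the random-tour description, since~\eqref{eq:pi_vector} makes the effect of a local modification transparent: moving to $f(T')\le f(T)$ amounts to showing the $\pi$-matrix changes in a controlled way.

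First I would handle the easy cleanup steps. If $T$ has an internal vertex $u$ of degree $1$, it is a useless pendant subtree hanging off the ``real'' tree and can simply be deleted without changing any $d_{ij}$ among leaves, hence without changing $f$; iterate until no such vertex remains. If $T$ has an internal vertex $u$ of degree $2$, contract one of its two incident edges: every path $P_{ij}$ either avoided $u$ (unaffected) or passed through $u$, in which case $u$ contributed a factor $\frac{1}{\deg_T(u)-1}=1$ to $\pi_{ij}$, so deleting $u$ leaves $\pi_{ij}$ unchanged. Thus $f$ is unchanged. After these steps every internal vertex has degree $\ge 3$.

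The main step, and the main obstacle, is reducing the degree of an internal vertex $u$ with $\deg_T(u)=d\ge 4$. Here I would invoke exactly the fact quoted in the excerpt from~\cite{Eickmeyer_et_al08}: the $\pi$-matrix of a non-cubic tree is a convex combination of $\pi$-matrices of cubic trees. Concretely, at $u$, partition the $d$ subtrees hanging off $u$ into two nonempty groups and create a new edge splitting $u$ into two vertices $u_1,u_2$ of smaller degree, one for each group; there are several ways to do this. The claim — which is the content of the cited result applied locally — is that $\pi(T)$ is a convex combination of the $\pi$-matrices of the trees obtained by such local splittings, so that $f(T)=\sum_{\text{splittings}} \mu_s\, f(T_s)$ for suitable weights $\mu_s\ge 0$ summing to $1$, and hence $\min_s f(T_s)\le f(T)$. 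Picking a splitting attaining the minimum strictly decreases $\sum_{u}(\deg_T(u)-3)$, a nonnegative integer bounded by roughly $n$. Iterating the whole procedure (cleanup, then one degree reduction) at most a linear number of times yields a cubic tree $T'$ with $f(T')\le f(T)$.

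Finally, for the polynomial running time: each iteration inspects $O(n)$ vertices, and for a vertex of degree $d$ the number of candidate splittings one needs to try can be kept small — it suffices to try the $\binom{d}{1}$ or so splittings that peel off a single subtree, which still keeps $\pi(T)$ in their convex hull by the associativity of the construction — so each iteration costs $\mathrm{poly}(n)$ time (each $f(T_s)$ is computable in $\mathrm{poly}(n)$ time from~\eqref{eq:pi_vector}), and there are $O(n)$ iterations. I would remark that the averaging argument also gives a cleaner alternative via Lemma~\ref{lem:random_tour}: $f(T)$ is the expected cost of a random compatible tour, and every compatible tour is a compatible tour of \emph{some} cubic refinement of $T$, so the best cubic refinement has $f$ at most the overall expectation $f(T)$; the algorithmic point is then just to locate such a refinement efficiently, which the local procedure above does.
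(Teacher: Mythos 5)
Your overall strategy --- suppress internal vertices of degree at most $2$, then repeatedly split a vertex $u$ of degree $q \geqslant 4$ and argue by a convexity/averaging argument that some split does not increase $f$ --- is the same as the paper's. But the one step you do not supply is precisely the mathematical content of the lemma. You assert that ``$\pi(T)$ is a convex combination of the $\pi$-matrices of the trees obtained by such local splittings'' and attribute this to the cited result of Eickmeyer et al. That result, as quoted in Section~\ref{sec:prelim}, says only that the $\pi$-matrix of a non-cubic tree lies in the convex hull of the $\pi$-matrices of \emph{all} cubic trees --- an existence statement over an exponentially large set, which is exactly why it does not by itself yield a polynomial-time algorithm and why Lemma~\ref{lem:round_to_cubic} is needed at all. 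The local statement, with explicit weights, has to be proved, and the paper does so: it takes the split $T^{v_1v_2}$ obtained by giving a new vertex $u'$ the neighbourhood $\{u,v_1,v_2\}$, and verifies by direct computation that
$$
\sum_{\{v_1,v_2\}\subseteq N(u)} \frac{1}{\binom{q}{2}}\,\pi^{v_1v_2}_{ij} = \pi_{ij}
$$
for every pair of leaves $i,j$, by counting how many of the $\binom{q}{2}$ splits route $P_{ij}$ through $u$ only (factor $\frac{q-1}{q-2}$ relative to $\pi_{ij}$), through $u'$ only (factor $\frac{q-1}{2}$), or through both (factor $\frac{q-1}{2(q-2)}$), and checking that the weighted average of these factors is $1$. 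Without this (or an equivalent) verification your argument is incomplete.

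Two smaller points. First, your suggestion that it suffices to try the $\binom{d}{1}$ splittings that ``peel off a single subtree'' does not work: detaching a single neighbour creates a new vertex of degree $2$, which after suppression returns the original tree and does not decrease $\sum_u(\deg_T(u)-3)$. One must detach at least two neighbours; the paper's $\binom{q}{2}$ candidates are still polynomially many, so your running-time analysis survives once corrected. Second, the ``cleaner alternative'' via Lemma~\ref{lem:random_tour} quietly relies on the same unproved fact in disguise, namely that the uniform distribution over tours compatible with $T$ is the appropriate mixture of the uniform distributions over tours compatible with its one-step refinements. The remaining ingredients --- the degree-$1$/degree-$2$ cleanup leaving $\pi$ unchanged, the potential-function termination argument, and the polynomial iteration count --- are correct and consistent with the paper.
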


\begin{proof}[Proof] 
Pick an internal vertex $u$ with degree $q > 3$. Next, pick two neighbors $v_1$ and $v_2$ of $u$. Let $T^{v_1v_2}$ denote the tree obtained from $T$ by adding a new internal vertex $u'$ with neighborhood $\{u,v_1,v_2\}$ and deleting $v_1$ and $v_2$ from the neighborhood of $u$. We claim that the $\pi$-matrix of $T$, as defined by \eqref{eq:pi_vector}, can be obtained as a convex combination of the $\pi$-matrices of the trees $T^{v_1v_2}$, where $v_1, v_2 \in N_T(u)$. In particular, there exists a pair $v_1$, $v_2$ such that $f(T^{v_1v_2}) \leqslant f(T)$. The lemma follows from the claim.

In order to prove the claim, denote by $(\pi_{ij})$ the $\pi$-matrix of $T$ and by $(\pi^{v_1v_2}_{ij})$ the $\pi$-matrix of $T^{v_1v_2}$. Consider a pair $i$, $j$ of leaves of $T$. 

If $u \notin P_{ij}(T)$, then $\pi^{v_1v_2}_{ij} = \pi_{ij}$ always.

Otherwise, $u \in P_{ij}(T)$. Let $n_{u}$, $n_{u'}$, $n_{uu'}$ denote the number of pairs $v_1$, $v_2$ such that $P_{ij}(T^{v_1v_2})$ contains, respectively, $u$ and not $u'$, $u'$ and not $u$, both $u$ and $u'$. Then $n_{u} = 1$, $n_{uu'} = 2(q-2)$ and $n_{u'} = {q \choose 2} - n_{u} - n_{u'} = \frac{1}{2}q^2 - \frac{5}{2}q + 3$. 

Therefore,
\begin{eqnarray*}
\sum_{\{v_1,v_2\} \subseteq N(u)} \frac{1}{{q \choose 2}} \pi^{v_1v_2}_{ij} 
&= & \frac{1}{{q \choose 2}} \left(\frac{q-1}{2} n_{u} + \frac{q-1}{q-2} n_{u'} + \frac{q-1}{2(q-2)} n_{uu'}\right) \pi_{ij}\\
&= & \frac{2}{q(q-1)} \left(\frac{q-1}{2} + \frac{q-1}{q-2} \left( \frac{1}{2}q^2 - \frac{5}{2}q + 3 \right) + \frac{q-1}{2(q-2)} 2(q-2)\right) \pi_{ij}\\
&= & \frac{2}{q} \left(\frac{1}{2} + \frac{1}{q-2} \left( \frac{1}{2}q^2 - \frac{5}{2}q + 3 \right) + 1\right) \pi_{ij}\\
&= & \frac{2}{q} \frac{1}{q-2} \left(\frac{q-2}{2} + \frac{1}{2}q^2 - \frac{5}{2}q + 3 + q-2\right) \pi_{ij}\\
&= & \frac{2}{q} \frac{1}{q-2} \left(\frac{1}{2}q^2 - q \right) \pi_{ij}\\
&= & \pi_{ij}.
\end{eqnarray*}

From what precedes, we infer that
$$
\sum_{\{v_1,v_2\} \subseteq N(u)} \frac{1}{{q \choose 2}} \pi^{v_1v_2}_{ij} = \pi_{ij}
$$
for all pairs of leaves $i$, $j$. The claim, and the result follow.
\end{proof}

Our final result follows.

\begin{theorem}
Algorithm \ref{alg:approx} is a $2$-approximation algorithm for the \bep.
\end{theorem}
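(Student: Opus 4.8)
The plan is to assemble the facts already laid out in this section into a short, clean argument, checking along the way that each step is polynomial. First I would confirm that the algorithm is well defined and efficient: step 1 is a standard minimum spanning tree computation; the while loop of steps 3--6 runs at most $n-2$ times, since each iteration turns one internal species-vertex into a leaf while creating exactly one new internal vertex, and it halts once every species is a leaf; and step 7 runs in polynomial time by Lemma \ref{lem:round_to_cubic}. Moreover, after the loop $T$ is a tree whose leaves are exactly the $n$ species (the species that were already leaves of $T_0$ are untouched, and each formerly internal species acquires a pendant leaf bearing its label), so Lemma \ref{lem:round_to_cubic} applies to $T$ and step 7 returns a feasible cubic tree.

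Next I would track the objective value $f$ through the algorithm. The spanning tree $T_0$ has cost $MST$. Attaching the zero-cost pendant edges in steps 3--6 and setting $\hat\delta_{ii'} = 0$ for each relabelled vertex $i$ extends $(\delta_{ij})$ to a function $(\hat\delta_{ij})$ on the whole vertex set of the final tree $T$; this is again a semimetric, because introducing a new vertex $i$ at distance $0$ from $i'$ creates no shortcut and alters no existing distance. With respect to $(\hat\delta_{ij})$ the final tree $T$ is still a minimum spanning tree, of cost $MST$. The geometric heart of the argument is then: the closed walk traversing each edge of $T$ exactly twice has cost $2MST$; every tour on the leaves of $T$ that is compatible with $T$ is obtained by shortcutting such a closed walk; and shortcutting does not increase cost since $(\hat\delta_{ij})$ obeys the triangle inequality. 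Hence every compatible tour costs at most $2MST$, and by Lemma \ref{lem:random_tour}, $f(T)$, being the average cost of a uniformly random compatible tour, satisfies $f(T) \leqslant 2MST$.

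Finally I would close the chain of inequalities. By Lemma \ref{lem:round_to_cubic} the output $T'$ of step 7 is a feasible cubic tree with $f(T') \leqslant f(T) \leqslant 2MST$, and by the lower bound \eqref{eq:MST_bd} we have $MST \leqslant OPT$, so $f(T') \leqslant 2\,OPT$. Together with the running-time remarks above, this establishes that Algorithm \ref{alg:approx} is a polynomial-time $2$-approximation for the \bep{}.

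I do not anticipate a genuine obstacle: all the ingredients are already proved or cited. The one place that merits an explicit line of justification is the claim that appending zero-cost pendant edges to a minimum spanning tree of a semimetric yields a minimum spanning tree of a semimetric on the larger vertex set — but, as noted, this is immediate because a leaf placed at distance $0$ from its neighbour neither shortcuts any path nor perturbs any existing distance, so both the triangle inequality and minimality are preserved.
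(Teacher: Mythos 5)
Your proposal is correct and follows the paper's own argument essentially step for step: the extended semimetric with zero-cost pendant edges, the doubled-tree closed walk of cost $2MST$, shortcutting to bound every compatible tour, averaging via Lemma \ref{lem:random_tour}, and finishing with Lemma \ref{lem:round_to_cubic} and the bound \eqref{eq:MST_bd}. The only difference is that you spell out a couple of routine verifications (termination of the loop, preservation of the semimetric and MST properties) that the paper leaves implicit.
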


\subsection{Tightness of the Lower Bounds}

Consider the metric instances of the \bep{} with $\delta_{1i} = 1$ for all $i > 1$ and $\delta_{ij} = 2$ for all pairs such that $i > 1$ and $j > 1$. For these instances, $MST = n-1$. However, as it can be easily checked, we also have $OPT \geqslant 2n-2$. Hence, $\lim_{n \to \infty} \frac{OPT}{MST} = 2$ for this family of instances. This indicates that analyzing the approximation factor of an algorithm in terms of $MST$ cannot yield a factor smaller than $2$. 

We believe that this is even true when the stronger bound $TSP$ is used, and make the following conjecture (backed by experimental evidence). 

\begin{conjecture}
The family of instances of the \bep{} in which $(\delta_{ij})$ is the shortest path metric of a $n$-vertex cycle satisfies $\lim_{n \to \infty} \frac{OPT}{TSP} = 2$.
\end{conjecture}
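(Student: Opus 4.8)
The plan is to prove the two one‑sided statements $\limsup_n OPT/TSP \le 2$ and $\liminf_n OPT/TSP \ge 2$ separately; the first is easy and unconditional, and the second is the real content. For the shortest‑path metric of the $n$‑cycle one has $TSP=n$ (the cycle itself is an optimal tour, and every tour uses $n$ edges of cost $\ge 1$) and $MST=n-1$ (a Hamiltonian path costs $n-1$, and every spanning tree uses $n-1$ edges of cost $\ge 1$). By the analysis of Algorithm~\ref{alg:approx} we get $OPT \le 2\,MST = 2(n-1)$, hence $OPT/TSP \le 2-2/n < 2$ for every $n$. So everything reduces to proving the matching lower bound $OPT \ge 2n - o(n)$.

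For the lower bound I would set $w_i := \sum_{j\ne i}\delta_{ij}2^{2-d_{ij}}$ and use Kraft's identity~\eqref{eq:Kraft_bep} (which also gives $\sum_{i<j}2^{2-d_{ij}}=n$) to rewrite, for any feasible cubic $T$,
$$ f(T) \;=\; \sum_{i<j}\delta_{ij}\,2^{2-d_{ij}} \;=\; n + \sum_{i<j}(\delta_{ij}-1)\,2^{2-d_{ij}} \;=\; \frac12\sum_i w_i, $$
so that it suffices to prove $\sum_i w_i \ge 4n - o(n)$ for every cubic tree $T$ on the $n$ leaves. Fixing $i$ and setting $p_j:=2^{1-d_{ij}}$, which is a probability distribution on the other leaves by Kraft equality, we have $w_i = 2\,\mathbb{E}_{j\sim p}[\delta_{ij}]$, so the target is that a $p$‑random leaf $j$ has circular distance at least $2-o(1)$ from $i$, on average over $i$. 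The heuristic is that $p$ cannot be too concentrated — $p_j\le 1/2$ always, and at most one leaf sits at tree‑distance $2$ from $i$ — so $j$ cannot systematically land on one of the two circular neighbours of $i$. Moreover the caterpillar with leaves in circular order has $w_i = 4 + o(1/n)$ for every $i$, hence $f(T)=2n+o(1)$, so $2n$ is the correct asymptotic value and one only needs to certify it from below.

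The honest obstacle is that the pointwise inequality $w_i\ge 4$ is false: if the subtree seen from $i$ is a caterpillar and one labels its leaves so that tree‑distance from $i$ is increasing in circular distance from $i$, then $\sum_j\delta_{ij}2^{-d_{ij}}\to 2/3$, i.e.\ $w_i\to 8/3 < 4$. So some amortization is unavoidable: the local labellings that push different $w_i$'s below $4$ are mutually incompatible — a leaf achieving $w_i\approx 8/3$ must be a peripheral leaf of a caterpillar‑like piece, and this forces its neighbours to see a circularly unfriendly order of leaves, driving their $w$'s back up to (in fact above) $4$. I see three routes to make this quantitative: (a) linear‑programming duality over the polytope of $\pi$‑matrices of cubic trees — produce a nonnegative combination of valid inequalities, going beyond the Kraft equalities (which alone only give $f(T)\ge n$), that certifies $f(T)\ge 2n-o(n)$ for the circular cost vector; (b) a charging/potential argument that tracks circular distance along the tree paths $P_{ij}$ and bounds the total deficit $\sum_i(4-w_i)^+$ by $o(n)$; (c) a structural‑stability argument showing that any $T$ with $f(T)<(2-\varepsilon)n$ must be ``circular'', i.e.\ the cyclic order of its leaves in a planar embedding essentially follows the $n$‑cycle, followed by a direct computation for such trees.

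I expect the main difficulty to be exactly this global bookkeeping: turning ``peripheral leaves are few and they cost their neighbours'' into a theorem. Route (a) needs the right extra facets of the $\pi$‑polytope for the circular metric, which are not the obvious ones and which the peripheral‑leaf examples show must be used; route (c) needs a robust version of ``optimal BME trees for the cycle are caterpillar‑like'', precisely the kind of stability statement that has presumably kept this a conjecture. Concretely I would first determine the extremal trees on small cycles computationally to pin down the structure, and then try to read off from the proof of their optimality either the inequalities needed for route (a) or the invariant needed for route (b).
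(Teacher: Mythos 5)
This statement is a \emph{conjecture} in the paper: the authors offer no proof of it, only the remark that it is ``backed by experimental evidence'', together with the weaker observation (for a different family of instances) that $OPT/MST\to 2$. So there is no proof in the paper to compare yours against, and the question is simply whether your proposal closes the conjecture. It does not. The half you do prove is correct: for the cycle metric $TSP=n$ and $MST=n-1$, and the paper's analysis of Algorithm~\ref{alg:approx} gives $OPT\leqslant 2\,MST=2(n-1)$, hence $\limsup_n OPT/TSP\leqslant 2$. Your reformulation of the other half is also sound: writing $f(T)=\tfrac12\sum_i w_i$ with $w_i=2\,\mathbb{E}_{j\sim p}[\delta_{ij}]$ and $p_j=2^{1-d_{ij}}$ a probability distribution (by the Kraft equality \eqref{eq:Kraft_bep}) correctly reduces the conjecture to showing $\sum_i w_i\geqslant 4n-o(n)$ for every cubic tree, and your caterpillar computation showing that $w_i$ can drop to $8/3$ for a peripheral leaf is right and correctly rules out the pointwise bound $w_i\geqslant 4$.

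But that is exactly where the proposal stops being a proof. The entire content of the conjecture is the amortized lower bound, and your routes (a), (b), (c) are candidate strategies, not arguments: none of them is carried out, no extra valid inequality for the $\pi$-polytope is exhibited, no charging scheme bounding $\sum_i(4-w_i)^+$ by $o(n)$ is defined, and no stability theorem for near-optimal trees is proved. You identify the obstacle honestly --- the local configurations achieving $w_i<4$ must be shown to be rare and to penalize neighbouring leaves --- but identifying an obstacle is not overcoming it. As written, the proposal establishes $\limsup_n OPT/TSP\leqslant 2$ (which already follows from the paper's Section~\ref{sec:approx}) and leaves $\liminf_n OPT/TSP\geqslant 2$ entirely open; the statement therefore remains a conjecture.
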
 

\section*{Acknowledgements}
We thank Daniele Cantazaro for drawing our attention to the \bep{}, and for interesting discussions on this topic. We also thank Dirk Oliver Theis for his involvement in the early stage of the research that lead to this paper.

\bibliographystyle{plain}
\bibliography{bmep}  

\end{document}